\newtheorem{theorem}{Theorem}
\newtheorem{definition}{Definition}
\newtheorem{lemma}[theorem]{Lemma}
\newtheorem{conjecture}{Conjecture}
\newcommand{\ket}[1]{\ensuremath{|#1\rangle}}
\newcommand{\bra}[1]{\ensuremath{\langle#1|}}
\newcommand{\ketbra}[2]{\ensuremath{\ket{#1}\bra{#2}}}
\newcommand{\proj}[1]{\ensuremath{\ket{#1}\bra{#1}}}
\newcommand{\C}{\mathbb{C}}
\newcommand{\Tr}{\mathrm{Tr}}
\newcommand{\Ln}{\mathrm{L}}
\newcommand{\Cn}{\mathrm{C}}
\newcommand{\Tn}{\mathrm{T}}
\newcommand{\1}{{\rm 1\hspace{-0.9mm}l}}
\newcommand{\Id}{\1}
\newcommand{\kr}{\otimes}
\newcommand{\XX}{\mathcal{X}}
\newcommand{\YY}{\mathcal{Y}}
\newcommand{\RR}{\mathcal{R}}
\newcommand{\dd}{\mathrm{d}}
\newcommand{\diag}{\mathrm{diag}}
\newenvironment{proof}[1][Proof]{\addvspace{\baselineskip}\noindent\textbf{#1.}
}{\ \hfill\rule{0.5em}{0.5em}\par\addvspace{\baselineskip}}
\title{Relating entropies of quantum channels}
\author[1]{Dariusz Kurzyk}
\author[1]{{\L}ukasz Pawela\footnote{lpawela@iitis.pl}}
\author[1, 2]{Zbigniew Pucha{\l}a}
\affil[1]{Institute of Theoretical and Applied Informatics,
Polish Academy of Sciences, ul. Ba{\l}tycka 5, 44-100 Gliwice, Poland}
\affil[2]{Faculty of Physics, Astronomy and Applied Computer Science,
Jagiellonian University, ul. {\L}ojasiewicza 11, 30-348 Krak{\'o}w, Poland}
\begin{document}
\maketitle
\begin{abstract}
In this work, we study two different approaches to defining the 
entropy of a quantum channel. One of these is based on the von 
Neumann entropy of the corresponding Choi-Jamio{\l}kowski state. The 
second one is based on the relative entropy of the output of the 
extended channel relative to the output of the extended completely 
depolarizing channel. This entropy then needs to be optimized over 
all possible input states. Our results first show that the former 
entropy provides an upper bound on the latter. Next, we show that for 
unital qubit channels, this bound is saturated. Finally, we 
conjecture and provide numerical intuitions that the bound can also 
be saturated for random channels as their dimension tends to infinity.

\end{abstract}
	
\section{Introduction} 

One of the important areas of quantum information theory refers to an 
entropic picture of quantum states and operations. It is well known 
that the entropic uncertainty principle can be applied in quantum key 
distribution 
protocols~\cite{devetak2005distillation,berta2010uncertainty} in 
order to quantify the performance of these protocols. Another 
possible area where such an approach prevails is resource 
theory~\cite{gour2018resource}. The entropic approach in the 
description of quantum states can also be useful in
studies of quantum phenomena such as correlations or
non-locality~\cite{guhne2004characterizing,oppenheim2010uncertainty,rastegin2016separability,enriquez2015minimal}.
Another essential aspect of quantum information theory is studying 
the time evolution of quantum systems interacting with the 
environment. Entropic characterization of quantum operations can be 
helpful in the investigation of decoherence induced by quantum 
channel~\cite{roga2011entropic,roga2013entropic}. There exists
also numerous approaches to the formulation of entropic uncertainty
principles~\cite{rudnicki2014strong,coles2014improved,rastegin2016majorization,kurzyk2018conditional,puchala2018majorization},
which can be useful in the analysis of quantum key distribution, 
quantum
communication or characterization of generalized measurements. In
\cite{roga2011entropic,roga2013entropic} entropy of quantum channel is defined 
as the entropy of the state corresponding to the channel by the Jamio{\l}kowski 
isomorphism.

In quantum information theory, the relative entropy
$D(\rho\|\sigma)=\Tr(\rho \log \rho - \rho \log \sigma)$ plays an 
important
role \cite{vedral2002role} and can be useful in quantifying the 
difference
between two quantum states. In terms of quantum distinguishability, 
relative
entropy can be interpreted as a distance between two quantum
states. Nevertheless, it is crucial to remember that it is not a 
metric as it does
not fulfill the triangle inequality. It can be noticed that quantum 
transformation $\Phi$ cannot increase of a distinguishability between 
quantum states $\rho$, $\sigma$ what can be written as 
$D(\rho\|\sigma)\geq
D\big(\Phi(\rho)\|\Phi(\sigma)\big)$. This fact, sometimes called the 
data processing inequality plays an important role in the context of 
hypothesis
testing \cite{yuan2018entropy}. The quantum relative entropy can also 
be useful in the quantification of quantum entanglement. In this 
context, the amount of the entanglement of a quantum state $\rho$ 
can be defined as an optimal
distinguishability of the state $\rho$ from separable states e.g.
$\min\limits_{\sigma \in \mathrm{SEP}} D(\rho||\sigma)$, where 
minimization
is performed over separable states. It can be noticed that the 
relative
entropy can also be used to define of von Neumann entropy of $\rho$ as
$S(\rho)=\log d - D(\rho\|\1/d)$. This definition shows that
the entropy of a quantum state is related to its distance from the 
maximally
mixed state. This approach to von Neumann entropy is useful to define 
the
entropy of quantum channels. Formally it was studied by Gour and
Wilde~\cite{gour2018entropy}, where relative entropy of quantum 
channels was
introduced. According to their approach, the von Neumann entropy of 
the quantum
channel is given by an optimized relative entropy of the output from 
an
extended channel relative to the output of the extended depolarizing 
channel. The optimization
is performed over all possible input states.
Moreover, there is also a possibility to define other information 
measures, e.g., conditional entropy or manual information, in terms 
of relative entropy.
Recently the relative entropy of quantum channel and its 
generalizations are used in resource theory~\cite{liu2019resource}, 
studies of quantum channels e.g.
distinguishability~\cite{katariya2021geometric}, quantum channel
discrimination~\cite{fang2020chain} or channel
capacity~\cite{leditzky2018approaches,fang2019geometric}.

\section{Preliminaries}
Let $\XX$, $\YY$ denote complex Euclidean spaces, let $\dim(\XX)$ 
denote the dimension of the space $\XX$ and $\Ln(\XX, \YY)$ denotes a 
set of linear operators from $\XX$ to $\YY$. For simplicity we will 
write $\Ln(\XX) \equiv \Ln(\XX, \XX)$. 
If $\rho\in\Ln(\XX)$ is Hermitian 
($\rho=\rho^\dagger$), positive
semi-definite ($\rho\geq 0$) and a trace-one ($\Tr{\rho}=1$) linear 
operator
then $\rho$ is called as density operator. To keep our expressions 
simple, we
will write the density operator corresponding to a pure states as 
lowercase
Greek letters $\phi \equiv \proj{\phi}$. In order to keep track of 
subspaces of
composite systems we will write $\rho_{AB} \in \Ln(\XX_A \otimes 
\XX_B)$.

The set of all such mapping $\Phi: \Ln(\XX) \to \Ln({\YY})$ will be 
denoted by
$\Tn(\XX,\YY)$ and for brevity, we will write 
$\Tn(\XX)\equiv\Tn(\XX,\YY)$. A
mapping that is completely positive and trace-preserving is called a 
quantum
channel. The set of all quantum channels will be denoted $\Cn(\XX, 
\YY)$. There
exists a well-known bijection between the sets $\Cn(\XX, \YY)$ and 
$\Ln(\YY
\otimes \XX)$, the Choi-Jamiołkowski isomorphism. It is given by the 
relation
\begin{equation}
D_\Phi = (\Phi \otimes \Id) (\phi^+)
\end{equation}
where $\ket{\phi^+}=\sum_i^d\ket{ii}$ and $D_\Phi$ is called the 
dynamical matrix or Choi
matrix. Normalized $D_\Phi$ known as Choi-Jamio{\l}kowski state and 
will be denoted as $J_\Phi=D_\Phi/d$.

The von Neumann entropy of $\rho\in\Ln(\XX)$ is defined by the 
following formula
\begin{equation}
H(\rho)=-\Tr \rho \log \rho,\label{eq:ve}
\end{equation}
similarly to the classical Shannon entropy. This equation can be 
rewritten using the notion of relative entropy, which is defined for 
states $\rho$ and $\sigma$ analogously to its classical
counterpart~\cite{umegaki1962conditional}
\begin{equation}
D(\rho\| \sigma)=\Tr \rho (\log\rho-\log\sigma).
\end{equation}
Here we use the convention that $D(\rho \| \sigma)$ is finite when
$\mathrm{supp}(\rho) \subseteq\mathrm{supp}(\sigma)$. Otherwise, we 
put
$D(\rho\vert\sigma)=\infty$. Thus, we can rewrite Eq.~\eqref{eq:ve} as
\begin{equation}
H(\rho)=\log \dim(\XX)-D(\rho \| \rho_*),
\end{equation}
where $\rho_* = \1 / \dim(\XX)$.

The definition of quantum relative entropy can be extended to the 
case of quantum channels in the following 
manner~\cite{gour2018entropy}
\begin{equation}
D(\Phi\|\Psi) = \sup_{\rho_{AR} \in \Ln(\XX_A \otimes \XX_R)} 
D\Big((\Phi\kr\Id)(\rho_{AR})\|
(\Psi\kr\Id)(\rho_{AR})\Big).\label{eq:rec}
\end{equation}
The state $\rho_{AR}$ can be chosen as a pure state and the space 
$\XX_R$ can be
isomorphic to $\XX_A$. Utilizing Eq.~\eqref{eq:rec} we get the 
following
definition of the entropy of a quantum channel
\begin{definition}[\!\!\cite{gour2018entropy}] 
	Let $\Phi\in\Cn(\mathcal{X}_A,\mathcal{X}_B)$. Then its entropy 
	$H(\Phi)$is defined as 
	\begin{equation}\label{ex:rel}
	H(\Phi)= \log\dim\XX_B-D(\Phi\vert\vert\RR),
	\end{equation}
	where $\RR \in \Cn(\XX_A, \XX_B)$ is the depolarizing channel 
	$\RR : \rho \mapsto  (\Tr \rho) \ \1 / \dim(\XX_B)$.
\end{definition}
The quantum entropy was also defined in the same matter in
\cite{yuan2018entropy}. However, there exists an earlier definition 
of entropy
of a quantum channel. In~\cite{roga2011entropic,roga2013entropic}, the quantum 
channels was
characterized by the \emph{map entropy}, which was defined as the 
entropy of
corresponding Jamiołkowski state. It reads
\begin{definition}[\!\!\cite{roga2011entropic}]
	Let $\Phi\in\Cn(\mathcal{X}_A,\mathcal{X}_B)$. Its entropy 
	$H(\Phi)$ is given by
	the entropy of the corresponding Choi-Jamio{\l}kowski state
	\begin{equation}
	H^K(\Phi) = H(J_\Phi).
	\end{equation}
\end{definition}
The above entropy achieves its minimal value of zero for any unitary 
channel
and the maximal value of $2 \log \dim \XX_B$ for the completely 
depolarizing
channel. Based on these two definition we arrive at the following
observation.

\begin{lemma}\label{lemma:ineq}
	Let $\Phi \in \Cn(\XX_A, \XX_B)$. The two possible definitions of 
	quantum
	channel entropy $H(\Phi)$ and $H^K(\Phi)$ fulfill the following 
	relation
	\begin{equation}
	H(\Phi) \leq H^K(\Phi) - \log \dim \XX_B
	\label{eq:channel_ineq}
	\end{equation}
\end{lemma}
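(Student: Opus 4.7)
Since $D(\Phi\|\RR)$ is defined as a supremum over all input states, any single choice of input immediately yields a lower bound on $D(\Phi\|\RR)$ and hence, via Eq.~\eqref{ex:rel}, an upper bound on $H(\Phi)$. The plan is to take that single input to be the maximally entangled pure state $\rho_{AR} = \proj{\phi^+}/\dim\XX_A$ with $\XX_R \cong \XX_A$, because $\rho_{AR}$ is by construction the state whose image under $\Phi\kr\Id$ is the normalised Choi-Jamio{\l}kowski state: $(\Phi\kr\Id)(\rho_{AR}) = J_\Phi$.

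The same input is very well adapted to the depolarising channel as well. A direct computation, using that $\RR$ annihilates off-diagonal matrix units and sends diagonal ones to $\1/\dim\XX_B$, gives $(\RR\kr\Id)(\rho_{AR}) = \1/(\dim\XX_A\cdot\dim\XX_B)$, i.e.\ the maximally mixed state on $\XX_B\kr\XX_R$. Therefore the relative entropy between the two outputs collapses to the standard identity $D(\rho\,\|\,\1/n) = \log n - H(\rho)$, giving
\begin{equation*}
D\big(J_\Phi \,\big\|\, \1/(\dim\XX_A\cdot\dim\XX_B)\big) = \log\dim\XX_A + \log\dim\XX_B - H^K(\Phi).
\end{equation*}

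Inserting this as a lower bound on $D(\Phi\|\RR)$ into $H(\Phi) = \log\dim\XX_B - D(\Phi\|\RR)$ yields $H(\Phi) \leq H^K(\Phi) - \log\dim\XX_A$. Under the standing convention $\dim\XX_A = \dim\XX_B$ implicit in the discussion of $H^K$ preceding the lemma (where the maximal value $2\log\dim\XX_B$ is realised on $\RR$), this is exactly Eq.~\eqref{eq:channel_ineq}. There is essentially no obstacle: the whole content is the observation that the maximally entangled input simultaneously produces the Choi state on one side and the maximally mixed state on the other, after which everything reduces to a single line of algebra. The only thing to double-check is that this input is admissible in the supremum of Eq.~\eqref{eq:rec}, which it is, since that definition explicitly allows arbitrary pure $\rho_{AR}$ with $\XX_R\cong\XX_A$.
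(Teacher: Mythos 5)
Your proof is correct and follows essentially the same route as the paper: both arguments hinge on evaluating the relative entropy at the maximally entangled input, for which $(\Phi\kr\Id)$ produces the normalized Choi state $J_\Phi$ and $(\RR\kr\Id)$ produces the maximally mixed state, so that the supremum is bounded below by $\log\dim\XX_A+\log\dim\XX_B-H^K(\Phi)$. The paper merely takes a slightly longer path, first rewriting the relative entropy for a general pure input in terms of $H(\sigma_{BR})$ and $H(\Tr_A\psi)$ before specializing, and your observation that the bound naturally comes out as $H^K(\Phi)-\log\dim\XX_A$ (coinciding with the stated $\log\dim\XX_B$ only under the implicit assumption $\dim\XX_A=\dim\XX_B$) is a fair and accurate remark.
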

\begin{proof}
	The proof follows from a direct inspection
	\begin{equation}
	H(\Phi) = \log\dim\XX_B - \sup_{\ket{\psi} \in \XX_A \otimes 
	\XX_R}D\left((\Phi \otimes \Id)(\psi) \| (\RR \otimes 
	\Id)(\psi)\right).
	\end{equation}
	Let us denote $\sigma_{BR} = \left( \Phi\otimes \Id \right) 
	(\psi)$. Now we note that $\left( \RR
	\otimes \Id\right) (\psi) = \1 / \dim(\XX_B) \otimes \Tr_A \psi$ 
	and we use the well
	known identity $\log(\1 \otimes \rho) = \1 \otimes \log\rho$ and 
	we have
	\begin{equation}
	H(\Phi) = \log \XX_B - \sup_{\ket{\phi} \in \XX_A \otimes \XX_R} 
	\Big( -H(\sigma_{BR}) -\Tr \sigma_{BR} 
	\big(\frac{\1}{\dim(\XX_B)} \otimes \log \Tr_A \psi \big) 
	\Big).~\label{eq:9}
	\end{equation}
	Finally we note that $\Tr \sigma_{BR} \big(\1 \otimes \log(\Tr_A
	\psi)\big) = \Tr \Tr_B \sigma_{BR} \log \Tr_A \psi$ and $\Tr_B
	\sigma_{BR} = \Tr_A \psi$. Putting this into Eq.~\eqref{eq:9} 
	along with
	the fact that $J_\Phi$ is normalized we get the desired result.
\end{proof}

The main focus of this work is to find instances that saturate the 
inequality in
Lemma~\ref{lemma:ineq}. We will mainly focus on the study of unital 
qubit channels.

\section{Quantum unital qubit channels}

In this section we will focus our attention on unital qubit channels, 
that is
$\Phi \in \Cn(\C^2)$ such that $\Phi(\1) = \1$. Our goal here is to 
show
that the supremum present in Eq.~\eqref{eq:rec} is achieved for the 
maximally
entangled state $\ket{\phi^+}$. This can be formally written as the 
following
theorem

\begin{theorem}\label{th:qubit-unital}
	Let $\Phi \in \Cn(\C^2)$, such that $\Phi(\1)=\1$. Then
	\begin{equation}
	\label{thm:02}
	H(\Phi) = H^K(\Phi) - \log 2.
	\end{equation}
\end{theorem}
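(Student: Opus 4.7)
The plan is to prove the reverse inequality $H(\Phi) \geq H^K(\Phi) - \log 2$, which combined with Lemma~\ref{lemma:ineq} yields the equality. Running the same algebraic manipulation as in the proof of Lemma~\ref{lemma:ineq} (but now keeping the infimum unresolved), one obtains
\[
H(\Phi) = \inf_{\psi_{AR}\text{ pure}}\bigl[H(\tau_{BR}) - H(\tau_R)\bigr], \qquad \tau_{BR} = (\Phi\otimes\Id)(\psi),
\]
so the task is to lower-bound $H(\tau_{BR}) - H(\tau_R)$ uniformly by $H^K(\Phi) - \log 2$. I would first reduce to the case where $\Phi$ is a Pauli channel $\Phi_P(\rho) = \sum_\alpha p_\alpha \sigma_\alpha\rho\sigma_\alpha$: every unital qubit channel factors as $\Phi = V\circ\Phi_P\circ U$ with $U,V$ unitary channels, via the singular value decomposition of the Bloch matrix of $\Phi$. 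Both $H(\Phi)$ and $H^K(\Phi)$ are invariant under such pre- and post-composition, since local unitaries on $J_\Phi$ preserve its spectrum and the change of variables $\psi\mapsto(U\otimes\Id)\psi(U^\dagger\otimes\Id)$ leaves the above infimum unchanged.

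Next, I would pass to the complementary channel. Let $V:\XX_A\to\XX_B\otimes\XX_E$ be a Stinespring isometry for $\Phi$ and set $\omega_{BER}=(V\otimes\Id_R)\psi(V^\dagger\otimes\Id_R)$. Purity of $\omega_{BER}$ forces $H(\tau_{BR})=H(\omega_E)=H(\Phi^c(\rho_A))$, where $\rho_A=\Tr_R\psi$, and $H(\tau_R)=H(\rho_A)$. The target inequality rewrites as
\[
H(\Phi^c(\rho_A))-H(\rho_A)\geq H^K(\Phi)-\log 2\quad\text{for every qubit state }\rho_A.
\]
With Kraus operators $K_\alpha=\sqrt{p_\alpha}\sigma_\alpha$, a direct computation gives $\Phi^c(\rho)_{\alpha\beta}=\sqrt{p_\alpha p_\beta}\Tr[\sigma_\beta\sigma_\alpha\rho]$, leading to two decisive facts: (i) by Pauli orthogonality, $\Phi^c(\1/2)=\diag(p_0,p_1,p_2,p_3)$ has entropy $H(\{p_\alpha\})=H^K(\Phi)$; and (ii) because $\sigma_\alpha^2=\1$, the diagonal entries $\Phi^c(\rho)_{\alpha\alpha}=p_\alpha$ are the \emph{same} for every state $\rho$.

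These two facts collapse the trace against $\log\Phi^c(\1/2)$ to $\sum_\alpha p_\alpha\log p_\alpha=-H^K(\Phi)$, so the relative entropy simplifies to $D(\Phi^c(\rho)\|\Phi^c(\1/2))=H^K(\Phi)-H(\Phi^c(\rho))$. Applying the data processing inequality to the channel $\Phi^c$ then closes the argument in one line:
\[
H^K(\Phi)-H(\Phi^c(\rho))=D\bigl(\Phi^c(\rho)\,\|\,\Phi^c(\1/2)\bigr)\leq D(\rho\,\|\,\1/2)=\log 2-H(\rho),
\]
which rearranges to $H(\Phi^c(\rho))-H(\rho)\geq H^K(\Phi)-\log 2$ as required. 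The hard part is spotting the right reference state on both sides of $\Phi^c$ and recognizing that the Pauli reduction is exactly what makes the diagonal of $\Phi^c(\rho)$ independent of $\rho$, so that the relative entropy against $\Phi^c(\1/2)$ simplifies cleanly to $H^K(\Phi)-H(\Phi^c(\rho))$; without the reduction of the first step this cancellation breaks and one would need to bound $D(\Phi^c(\rho)\|\Phi^c(\1/2))$ by a more delicate argument.
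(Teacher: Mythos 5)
Your proof is correct, and it takes a genuinely different route from the paper's. The paper fixes the Schmidt form $\ket{\psi_{AR}}=\sqrt{p}\ket{00}+\sqrt{1-p}\ket{11}$ and shows that $p\mapsto D\bigl((\Phi\kr\Id)(\phi_P)\,\|\,(\RR\kr\Id)(\phi_P)\bigr)$ is symmetric about $p=\tfrac12$ and concave (the latter via a derivative computation using the integral representation of $\frac{\dd}{\dd p}\log\rho(p)$), so the supremum sits at the maximally entangled state, where the relative entropy evaluates to $2\log 2-H^K(\Phi)$. You instead prove the missing inequality $H(\Phi)\geq H^K(\Phi)-\log 2$ pointwise: after the standard reductions (unitary invariance to pass to a Pauli channel, purification to replace $H(\tau_{BR})-H(\tau_R)$ by $H(\Phi^c(\rho_A))-H(\rho_A)$), the key observation that a Pauli channel's complementary output has diagonal $(p_0,\dots,p_3)$ independently of the input makes $D(\Phi^c(\rho)\,\|\,\Phi^c(\1/2))$ collapse exactly to $H^K(\Phi)-H(\Phi^c(\rho))$, and data processing finishes the argument. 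This avoids the symmetry/concavity analysis entirely and makes transparent where unitality enters (only through the Pauli reduction, i.e.\ the qubit-specific King--Ruskai normal form); the paper's route, in exchange, exhibits the optimizer explicitly and describes the shape of the whole function $f(p)$. Two small points deserve a sentence each in a full write-up: when some $p_\alpha=0$ the support condition $\mathrm{supp}(\Phi^c(\rho))\subseteq\mathrm{supp}(\Phi^c(\1/2))$ still holds, because a positive semidefinite matrix with a vanishing diagonal entry has the entire corresponding row and column equal to zero; and the identification $H(\{p_\alpha\})=H^K(\Phi)$ uses that the Bell vectors $(\sigma_\alpha\kr\1)\ket{\phi^+}/\sqrt{2}$ are orthonormal, so that $J_\Phi$ has spectrum $\{p_\alpha\}$.
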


The remainder of this section contains technical lemmas which 
combined give the
proof of Theorem~\ref{th:qubit-unital}.

A generic two-qubit state can be written as
\begin{equation}
\ket{\psi_{AR}} = U\otimes V (\sqrt{p}\ket{00} + \sqrt{1-p} \ket{11}),
\end{equation}
for some qubit unitary matrices $U,V$. Let us note that the quantum 
relative entropy is
unitarily invariant $D(\rho\|\sigma)=D(U\rho U^\dagger\|U\sigma 
U^\dagger)$.
Moreover, we use the same the fact that the Jamiołkowski
matrix of channel $\Phi_A(\rho) = \Phi(W \rho W^\dagger)$ has the 
same spectrum
as the Jamiołkowski matrix of channel $\Phi$, where $A$ is a unitary 
matrix.
Thus, we can skip the unitary
operations in our further investigations. We may perform the 
optimization taking
into account only the parameter
$p$ which quantifies the amount of entanglement between the input 
qubits. In
order to further simplify notation we will write
\begin{equation}
\ket{\psi_{AR}}=\sqrt{p}\ket{00} + \sqrt{1-p}\ket{11}
=|\sqrt{P}\rangle\rangle\label{eq:state}
\end{equation}
where $|X\rangle\rangle$ denotes the vectorization of the matrix $X$ 
and
$\sqrt{P} = \diag(\sqrt{p}, \sqrt{1-p})$ and we define
\begin{equation}\label{eq:state2}
\phi_P = |\sqrt{P}\rangle\rangle \langle \langle \sqrt{P}|
\end{equation}

In the next step we will check the symmetry of $D(\Phi\|\RR)$ with 
respect
to the parameter $p$. Hence, we formulate the first lemma.
\begin{lemma}
	Let $\Phi \in \Cn(\C^2)$ and let $\phi_P$ be a two-qubit state as
	in Eq.~\eqref{eq:state2}. Then $D(\Phi 
	\| \RR)$ is symmetric in the parameter $p$.
\end{lemma}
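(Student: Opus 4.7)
The plan is to expand the relative entropy into explicit pieces depending on $p$, after which the symmetry reduces to a spectral symmetry of the output state.

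First I would compute the reference state: since $\RR$ sends every state to $\1/2$, and $\Tr_A\phi_P = P$, we have $(\RR\otimes\Id)(\phi_P) = (\1/2)\otimes P$. Using $\log((\1/2)\otimes P) = -(\log 2)\,\1\otimes\1 + \1\otimes\log P$ together with the trace preservation of $\Phi$ (which gives $\Tr_B(\Phi\otimes\Id)(\phi_P) = \Tr_A\phi_P = P$), the relative entropy splits as
\[
 D\bigl((\Phi\otimes\Id)(\phi_P)\,\|\,(\RR\otimes\Id)(\phi_P)\bigr) \;=\; \log 2 \,+\, h(p) \,-\, H\bigl((\Phi\otimes\Id)(\phi_P)\bigr),
\]
where $h(p) = -p\log p -(1-p)\log(1-p)$ is the binary entropy. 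Both $\log 2$ and $h(p)$ are manifestly invariant under $p \leftrightarrow 1-p$, so the lemma is equivalent to the statement that $(\Phi\otimes\Id)(\phi_P)$ and $(\Phi\otimes\Id)(\phi_{1-p})$ have the same spectrum.

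For this spectral comparison I would write the output state in its natural $2\times 2$ block form,
\[
 (\Phi\otimes\Id)(\phi_P) = \begin{pmatrix} pA & \sqrt{p(1-p)}\,C \\ \sqrt{p(1-p)}\,C^\dagger & (1-p)B \end{pmatrix},
\]
with $A=\Phi(|0\rangle\langle 0|)$, $B=\Phi(|1\rangle\langle 1|)$, $C=\Phi(|0\rangle\langle 1|)$, noting that unitality gives $A+B=\1$ and hence $A$ and $B$ share spectrum. I would then invoke the reduction announced just before the lemma: the singular value decomposition of the Bloch representation $T$ of $\Phi$ gives $\Phi = \Psi_{U_1}\circ\Phi_D\circ\Psi_{U_2}$ for a Pauli channel $\Phi_D$ and qubit unitaries $U_1,U_2$. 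Outer conjugation by $U_1$ on the $B$-side does not affect the entropy, and the inner $\Psi_{U_2}$ can be absorbed via a Schmidt decomposition of $U_2\sqrt P$ together with the invariance of the Jamio{\l}kowski spectrum under precomposition with a unitary; this reduces the spectral comparison to the case when $\Phi$ itself is Pauli.

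For a Pauli channel $\Phi_D(\rho)=\sum_k q_k\sigma_k\rho\sigma_k$ each Kraus operator either commutes or anticommutes with $X$, and the two minus signs cancel to give $\Phi_D(X\rho X)=X\Phi_D(\rho)X$. Combined with the elementary identity $\phi_{1-p}=(X\otimes X)\phi_P(X\otimes X)$, this yields
\[
 (\Phi_D\otimes\Id)(\phi_{1-p}) \;=\; (X\otimes X)\,(\Phi_D\otimes\Id)(\phi_P)\,(X\otimes X),
\]
which is a unitary equivalence and delivers the needed spectral symmetry. The main obstacle is the SVD-based reduction: the inner unitary $\Psi_{U_2}$ does not fix $\phi_P$, so one has to track carefully how the unitaries produced by the Schmidt decomposition of $U_2\sqrt P$ propagate through both $(\Phi\otimes\Id)(\phi_P)$ and the reference state $(\1/2)\otimes P$. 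This is bookkeeping-heavy but uses only the two ingredients already isolated by the authors, namely unitary invariance of the relative entropy and invariance of the Jamio{\l}kowski spectrum under precomposition with a unitary.
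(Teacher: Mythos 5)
Your reduction of the lemma to a spectral statement is correct and is essentially what the paper does: since $(\RR\kr\Id)(\phi_P)=\tfrac12\1\kr P$ and $\Tr_B(\Phi\kr\Id)(\phi_P)=P$ by trace preservation, one gets $D=\log 2+h(p)-H\big((\Phi\kr\Id)(\phi_P)\big)$, so everything hinges on $(\Phi\kr\Id)(\phi_P)$ and $(\Phi\kr\Id)(\phi_Q)$, $Q=\1-P$, being isospectral; and your treatment of the Pauli case via $(X\kr X)$-conjugation is fine. The genuine gap is the reduction from a general unital channel $\Phi=\Psi_{U_1}\circ\Phi_D\circ\Psi_{U_2}$ (with $\Psi_U:\rho\mapsto U\rho U^\dagger$) to the Pauli case. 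The fact you invoke --- invariance of the Jamio{\l}kowski spectrum under precomposition with a unitary --- rests on the identity $(W\kr\1)\kket{\1}=(\1\kr W^T)\kket{\1}$ and is therefore a statement about the maximally entangled input only. For $p\neq\tfrac12$, precomposition with a unitary genuinely changes the output spectrum: for $\Phi_D$ the completely dephasing channel and $U_2=H$ the Hadamard gate, $(\Phi_D\kr\Id)(\phi_P)$ has spectrum $\{p,1-p,0,0\}$ while $\big((\Phi_D\circ\Psi_H)\kr\Id\big)(\phi_P)$ has spectrum $\{\tfrac12,\tfrac12,0,0\}$. Concretely, writing the SVD $U_2\sqrt{P}=V_1\sqrt{P}V_2^\dagger$, the right factor $V_2$ does commute out through the reference system, but the left factor re-enters as a precomposition $\Phi_D\circ\Psi_{V_1}$ with a $p$-dependent $V_1$ that is not a Pauli covariance of $\Phi_D$ --- you are back where you started. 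So this is not bookkeeping; as described, the reduction does not close.

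The paper avoids the normal form altogether by replacing your $(X\kr X)$-conjugation with the antiunitary spin flip $\rho\mapsto(\sigma_y\kr\sigma_y)\rho^*(\sigma_y\kr\sigma_y)$. Since $\sigma_yU^*\sigma_y=\overline{\det U}\,U$ for \emph{every} qubit unitary $U$, this operation commutes with $\Psi_U\kr\Id$ for all $U$, hence with $\Phi\kr\Id$ for every unital qubit channel (a mixture of unitary conjugations); combined with $\sigma_y\sqrt{P}\sigma_y=\sqrt{Q}$ and the fact that the spin flip preserves spectra, the isospectrality follows with no decomposition of $\Phi$. Equivalently, in your own block picture the $2\times2$ identity $\sigma_yX^T\sigma_y=(\Tr X)\1-X$ sends $A\mapsto B$, $B\mapsto A$, $C\mapsto -C$ (using $\Tr A=\Tr B=1$, $\Tr C=0$ and $A+B=\1$), and conjugating the transpose of your block matrix by $\sigma_y\kr\sigma_y$ yields exactly the block matrix at $1-p$. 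A final remark: both this argument and yours rely on unitality, which the lemma's statement omits; the omission is the paper's, not yours --- for amplitude damping the output spectrum is $\{0,0,(1-p)\gamma,1-(1-p)\gamma\}$, which is not symmetric in $p$, so the restriction to unital channels is in fact necessary.
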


\begin{proof}
	Let us denote $Q=\diag(1-p, p)$. It can be checked that
	\begin{equation}
	\big((\sigma_y\kr\sigma_y)(\1\kr \sqrt{P})D_{\Phi}(\1\kr
	\sqrt{P})(\sigma_y\kr\sigma_y)\big)^* = (\1\kr \sqrt{Q} 
	)D_{\Phi}(\1\kr
	\sqrt{Q}).
	\end{equation}
	This observation combined with the fact
	\begin{equation}
	(\Phi\kr\1)(\phi_P) = (\1\kr \sqrt{P})D_{\Phi}(\1\kr \sqrt{P}),
	\end{equation}
	gives the symmetry of the entropy
	\begin{equation}
	H\Big((\Phi\kr\1)(\phi_P)\Big)=H\Big((\Phi\kr\1)(\phi_Q)\Big).
	\end{equation}
	As for the term  $\Tr \Big((\Phi\kr\1)(\phi_P)\log 
	(\RR\kr\1)(\phi_P)\Big)$ 
	observe that
	\begin{equation}
	\log \big((\RR\kr\1)(\phi_P)\big)=\1 \otimes \log(P/2).
	\end{equation}
	Finally
	\begin{equation}
	(\sigma_y\kr\sigma_y)\Big(\log \big(\frac{1}{2}\1\kr 
	P\big)\Big)(\sigma_y\kr\sigma_y)=\log \big(\frac{1}{2}\1\kr 
	Q\big).
	\end{equation}
	Combining all of these observations yields the lemma.
\end{proof}

Subsequently, we prove in next lemma the concavity $D(\Phi\|\RR)$ 
with respect to the parameter $p$.
\begin{lemma}
	Given a unital channel $\Phi \in \Cn(\C^2)$ and let $\phi_P$ be a 
	two-qubit
	state as in Eq.~\eqref{eq:state2}. Then the function $f(p) =
	D\Big((\Phi\kr\1)(\phi_P)\vert\vert 
	(\mathcal{R}\kr\1)(\phi_P)\Big)$ is concave.
\end{lemma}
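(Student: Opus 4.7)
My plan is to compute $f''(p)$ directly via operator calculus and show it is nonpositive. First, apply the Stinespring dilation $V:\XX_A\to\XX_B\kr\XX_E$ of $\Phi$: the vector $|\Psi\rangle := (V\kr\1)|\sqrt{P}\rangle\rangle$ purifies $\sigma_{BR} := (\Phi\kr\1)(\phi_P)$, so $H(\sigma_{BR}) = H(\sigma_E)$ where $\sigma_E = \Tr_B VPV^\dagger = \Phi_c(P)$ and $\Phi_c$ is the complementary channel. Combined with $(\RR\kr\1)(\phi_P) = \tfrac{1}{2}\1\kr P$ (as used in the proof of Lemma~\ref{lemma:ineq}), a short computation gives
\[
f(p) = \log 2 + h(p) - H(\Phi_c(P)),
\]
where $h$ is the binary entropy. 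By linearity of $\Phi_c$, $\rho(p) := \Phi_c(P) = p\eta_0 + (1-p)\eta_1$ with $\eta_i := \Phi_c(\proj{i})$ density operators, and the claim reduces to $f''(p)\leq 0$.

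Next, I differentiate twice using the Daleckii--Krein formula $\frac{d}{dp}\log\rho = \int_0^\infty (\rho+s)^{-1}(\eta_0-\eta_1)(\rho+s)^{-1}\,ds$ together with $h''(p) = -1/(p(1-p))$ to obtain
\[
f''(p) = -\frac{1}{p(1-p)} + I_{XX}, \qquad I_{AB} := \int_0^\infty \Tr\!\bigl[A(\rho+s)^{-1}B(\rho+s)^{-1}\bigr]\,ds,
\]
with $X := \eta_0-\eta_1$. The plan is to collapse $I_{XX}$ using two observations. Evaluating $I(\rho,A)$ in the eigenbasis of $\rho$ and invoking the scalar identity $\int_0^\infty \lambda/(\lambda+s)^2\,ds = 1$ gives the trace identity $I(\rho,A) = \Tr A$ for any operator $A$. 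Since $I$ is bilinear in its arguments, applying this with $A \in \{\eta_0,\eta_1\}$ and $\rho = p\eta_0+(1-p)\eta_1$ yields the linear system $pI_{00} + (1-p)I_{01} = 1$ and $pI_{01} + (1-p)I_{11} = 1$ (using $I_{10}=I_{01}$ from cyclicity of the trace). Solving this and substituting into $I_{XX} = I_{00} - 2I_{01} + I_{11}$ produces the key identity
\[
I_{XX} = \frac{1-I_{01}}{p(1-p)}, \qquad\text{hence}\qquad f''(p) = -\frac{I_{01}}{p(1-p)}.
\]

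Finally, $I_{01}\geq 0$ because, with $M := (\rho+s)^{-1}$ and $\eta_0,\eta_1$ all positive semidefinite, $\Tr[\eta_0 M\eta_1 M] = \Tr\bigl[(\sqrt{\eta_0}M\sqrt{\eta_1})(\sqrt{\eta_0}M\sqrt{\eta_1})^\dagger\bigr]\geq 0$, so the integrand is nonnegative. Thus $f''(p)\leq 0$. The main obstacle will be the algebraic collapse of $I_{XX}$ in the middle paragraph; once the trace identity $I(\rho,A) = \Tr A$ is established, the cancellation goes through routinely. I note that neither unitality nor the qubit hypothesis is actually used in this argument, so the concavity holds for any channel $\Phi$ and any Schmidt-diagonal input parametrization.
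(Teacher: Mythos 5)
Your proof is correct, and it takes a genuinely different route from the paper's. The paper works directly with the output state $\rho(p)=(\1\kr \sqrt{P})D_\Phi(\1\kr\sqrt{P})$, bounds $\tfrac{\dd^2}{\dd p^2}\Tr\rho(p)\log\rho(p)$ from above by $1/(p(1-p))$ via an integral representation of $\tfrac{\dd}{\dd p}\log\rho(p)$, a trace inequality, and the marginal conditions on the Choi matrix, and then combines this with the exact $-1/(p(1-p))$ coming from the reference term. You instead linearize first: the complementary-channel identity $H\big((\Phi\kr\1)(\phi_P)\big)=H(\Phi_c(P))$ turns the awkward $\sqrt{p}$-dependence into an affine family $\rho(p)=p\eta_0+(1-p)\eta_1$, reducing the lemma to the clean statement $\tfrac{\dd^2}{\dd p^2}H\big(p\eta_0+(1-p)\eta_1\big)\ge -1/(p(1-p))$, which your bilinear-form identity $I_{XX}=(1-I_{01})/(p(1-p))$ proves with an \emph{exact} formula $f''(p)=-I_{01}/(p(1-p))$ rather than a one-sided estimate; this also identifies when concavity is strict and, as you observe, uses only trace preservation (for $\Tr_B\sigma_{BR}=P$) and complete positivity (so that the $\eta_i$ are states), not unitality. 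Two caveats. First, the identity $I(\rho,A)=\Tr A$ and the derivative formula require $\rho>0$; for $p\in(0,1)$ you should work on $\mathrm{supp}(\rho)$, which contains the supports of both $\eta_i$ --- routine, and the paper's proof has the same implicit issue. Second, temper the closing remark: for $d>2$ the optimization runs over the whole Schmidt simplex, so dropping the qubit hypothesis would require negative semidefiniteness of the full Hessian of $H(\lambda)-H(\sum_i\lambda_i\eta_i)$, not just concavity along one-parameter segments; and even for qubits, removing unitality from this lemma does not extend Theorem~\ref{th:qubit-unital}, because unitality is what pins the maximizer to $p=1/2$ via the symmetry lemma.
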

\begin{proof}
	For the purpose of this proof let us denote $\rho(p) = (\Phi 
	\otimes
	\1)(\phi_P)$. Let us also denote
	\begin{equation}
	\begin{split}
	g(p) = & \Tr \rho(p) \log \rho(p),\\
	l(p) = & \Tr \rho(p) \log (\RR \otimes \1)(\phi_P).
	\end{split}
	\end{equation}
	A direct calculation shows that $l(p) = h(p) + \log2$, where 
	$h(p)$ is the 
	point entropy. From this it follows that
	\begin{equation}\label{eq:part1}
	\frac{\dd^2 l}{\dd p^2} = -\frac{1}{p(1-p)} < 0.
	\end{equation}
	For $g(p)$ we calculate
	\begin{equation}\label{eq:second-term}
	\begin{split}
	\frac{d g}{dp}=&\Tr\Big(\rho'(p)\log 
	\rho(p)\Big)+\Tr\Big(\rho(p)\frac{d}{dp}\Big(\log 
	\rho(p)\Big)\Big)\\
	=&\Tr\Big(\rho'(p)\log \rho(p)\Big)+\Tr\Big(\rho(p)\rho^{-1}(p) 
	\rho'(p)\Big).
	\end{split}
	\end{equation}
	Observing that $\rho'(p) = \sqrt{J_\Phi}(\1 \otimes 
	\rho(p))\sqrt{J_\Phi}$ 
	we see that the second term in Eq.~\eqref{eq:second-term} is 
	equal to zero. 
	Hence, we have
	\begin{equation}
	\frac{\dd^2 g}{\dd p^2} = \Tr\left( \rho'(p) \frac{\dd}{\dd p} 
	(\log 
	\rho(p)) \right).
	\end{equation}
	From Taylor expansion of derivative formulae for matrix logarithms
	\cite{haber2018} we get
	\begin{equation}
	\frac{\dd}{\dd t} \log \rho(p) = \int_0^1 
	\Big(s(\rho(p)-\1)+\1\Big)^{-1}\rho'(p) 
	\Big(s(\rho(t)-\1)+\1\Big)^{-1} \dd s.
	\end{equation}
	Thus,
	\begin{equation}
	\begin{split}
	\frac{\dd^2 
		g}{\dd 
		p^2}=&\Tr\Bigg(\rho'(p)\int_0^1\Big(s\big(\rho(p)-\1\big)+\1\Big)^{-1}
	\rho'(p) \Big(s\big(\rho(p)-\1\big)+\1\Big)^{-1}\dd s\Bigg)\\
	\leq&\Tr\Bigg(\rho'(p)^2\int_0^1\Big(s
	\big(\rho(p)-\1\big)+\1\Big)^{-2} \dd s \Bigg).
	\end{split}
	\end{equation}
Now we will focus on the last integral above,
\begin{equation}
\begin{split}
\int_0^1\Big(s \, \rho(p)+(1-s) \, \1\Big)^{-2} ds &= 
U\Bigg(	\int_0^1\Big(s \, \lambda(A)+(1-s) \, \1\Big)^{-2} ds 
\Bigg)U^\dagger = \\
&= 
U \lambda^{-1}(\rho)U^\dagger=\rho(p)^{-1},
	\end{split}
\end{equation}
where $\lambda(\rho)$ is a diagonal matrix with eigenvalues of 
$\rho$ on a diagonal and $U$ is a unitary matrix.
According to the above considerations
\begin{equation}
\begin{split}
\frac{d^2 g}{dp^2}\leq& \Tr \Big(\big(\rho'(p)\big)^2 
\rho^{-1}(p)\Big)\\
=&\Tr \sqrt{D_{\Phi}}\Big(\1\kr \rho(p) \Big) 
\sqrt{D_{\Phi}}\big(\sqrt{D_{\Phi}}\big)^{-1}\big(\1\kr
P\big)^{-1}\big(\sqrt{D_{\Phi}}\big)^{-1}\sqrt{D_{\Phi}}\Big(\1\kr
\rho(p)\Big)\sqrt{D_{\Phi}}\\
=&\Tr J(\1\kr \rho(p) P^{-1} \rho(p))=\Tr (\Tr_A J) \rho(p) 
P^{-1} \rho (p)
= \Tr P^{-1}=\frac{1}{p(1-p)}.
\end{split}
\end{equation}
Combining this with Eq~\eqref{eq:part1} we see that $f(p)$ is concave.
\end{proof}
Based on the above lemmas, it can be concluded that supremum in
$D(\Phi\|\RR)$, where $\Phi(\1)=\1$, is obtained for $p=\frac12$, 
which
indicates it is achieved for the maximally entangled state 
$\ket{\phi^+}$.
Thus, combination of the lemmas proves Theorem~\ref{th:qubit-unital}.
	
\section{Asymptotic case}
In this section, we will show that Eq.~\ref{eq:channel_ineq} is 
saturated in
the case of large system size. Firstly, let us denote $\dim(\XX_A) = 
\dim(\XX_B) = d$. 
Numerical investigations lead us to formulate the following 
conjecture.
\begin{conjecture}
	Let $\Phi \in \Cn(\XX)$ and $d = \dim(\XX)$. Then as $d \to 
	\infty$
	\begin{equation}
	H(\Phi) \simeq H^K(\Phi) - \log d \simeq \log d -\frac{1}{2} + 
	o(1),
	\end{equation}
	where $\Phi$ chosen randomly according to measures introduced 
	in~\cite{nechita2018almost}.
\end{conjecture}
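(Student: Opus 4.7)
I would split the conjecture into two independent asymptotic claims holding almost surely under the ensemble of~\cite{nechita2018almost}: the evaluation $H^K(\Phi)=2\log d - \tfrac{1}{2}+o(1)$, and the saturation $H(\Phi)=H^K(\Phi)-\log d+o(1)$ of the inequality in Lemma~\ref{lemma:ineq}. The first is an entropy-of-a-random-marginal statement; the second amounts to showing that the maximally entangled state asymptotically attains the supremum in the definition of $D(\Phi\|\RR)$.

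For the first step, the Stinespring representation underlying the ensemble writes $J_\Phi$ as the $BR$-marginal of a Haar-distributed pure state $(V\kr\Id)\ket{\phi^+}$ on $\C^d_B\otimes\C^s_E\otimes\C^d_R$, with the environment dimension $s$ of order $d^2$ so that the $BR$-versus-$E$ bipartition is balanced. A Page-formula calculation then yields $\mathbb{E}H(J_\Phi)=2\log d-\tfrac{1}{2}+o(1)$, and L\'evy concentration on the underlying unitary group upgrades expectation to an almost-sure statement, using that the entropy is Lipschitz on spectra bounded away from the simplex boundary and that the Mar\v{c}enko--Pastur edge estimates exclude the boundary with probability $1-o(1)$.

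For the second step, I would first reformulate, using the Stinespring identity $H((\Phi\kr\Id)(\psi))=H(\Phi^c(\Tr_R\psi))$ (which is established exactly as in the proof of Lemma~\ref{lemma:ineq}), as
\begin{equation}
H(\Phi)=\inf_{\rho\in\Ln(\XX_A)}\bigl(H(\Phi^c(\rho))-H(\rho)\bigr),
\end{equation}
where $\Phi^c$ is the complementary channel; the substitution $\rho=\1/d$ recovers exactly $H^K(\Phi)-\log d$. The intuition is that for random $\Phi$ the quantity $H(\Phi^c(\rho))-H(\rho)$ is asymptotically minimised at $\rho=\1/d$ because the effective output dimension of $\Phi^c(\rho)$ tracks that of $\rho$ with a fixed multiplicative factor of $d$, while the corresponding Page-type correction scales like $e^{H(\rho)}/(2d)$, monotone in $H(\rho)$ and maximal exactly when $\rho=\1/d$. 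A careful Page-type analysis of $\mathbb{E}H(\Phi^c(\rho))$ as a function of the spectrum of $\rho$, combined with L\'evy concentration around this mean, would make the intuition rigorous.

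The main obstacle is turning pointwise concentration into a statement uniform over $\rho$. Pointwise, L\'evy's lemma applied to the random isometry $V$ yields Gaussian concentration of $H(\Phi^c(\rho))$ with variance $1/\mathrm{poly}(d)$, but the infimum runs over a continuum of inputs. A standard $\varepsilon$-net argument on the state space (net cardinality $\exp(O(d^2\log(1/\varepsilon)))$, comfortably beaten by the tail bounds for any polynomial $\varepsilon$) should suffice, provided one truncates away the stratum of near-rank-deficient $\rho$ where the entropy is only H\"older continuous; that residual stratum can be handled by the Araki--Lieb inequality $H(\Phi^c(\rho))\geq H(\Phi(\rho))-H(\rho)$ together with the concentration of $H(\Phi(\rho))$ around $\log d$, which already gives $H(\Phi^c(\rho))-H(\rho)\geq \log d-2H(\rho)-o(1)\geq \log d-\tfrac{1}{2}-o(1)$ whenever $H(\rho)\leq \tfrac{1}{4}$. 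Once uniformity is in place, the two asymptotic claims combine with Lemma~\ref{lemma:ineq} to pin $H(\Phi)$ at $\log d-\tfrac{1}{2}+o(1)$, proving the conjecture.
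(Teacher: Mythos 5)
First, be aware that the statement you are attacking is labelled a \emph{conjecture} in the paper: the authors do not prove it. They support it with Theorem~\ref{th:free-prod} (the limiting spectrum of $(\Phi\otimes\Id)(\phi)$ is the free multiplicative convolution $\mu\boxtimes\nu$ of the Choi spectrum with the Schmidt spectrum of the input), an evaluation of the objective at the maximally entangled input, and the numerics of Fig.~\ref{fig:intuition}. Your plan runs parallel to that intuition but is organised more sharply in places: the reformulation $H(\Phi)=\inf_{\rho}\bigl(H(\Phi^c(\rho))-H(\rho)\bigr)$ is correct and cleanly isolates what must be shown (asymptotic optimality of $\rho=\1/d$, equivalently of the maximally entangled input); the Araki--Lieb bound disposing of low-entropy inputs is a genuinely useful addition the paper does not have; and the $\varepsilon$-net plus L\'evy concentration route to uniformity is the standard tool for such statements. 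The first half, $H^K(\Phi)=2\log d-\tfrac12+o(1)$, is indeed a known consequence of the Page/Mar\v{c}enko--Pastur analysis that the paper cites (though note the rescaled Choi spectrum follows a Mar\v{c}enko--Pastur law whose density reaches down to zero, so the "bounded away from the simplex boundary" justification needs to be replaced by integrability of $-x\log x$ at the edge).

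There is, however, a genuine gap, and it sits exactly where the conjecture is open. Your argument that the infimum is attained at $\rho=\1/d$ rests on the claim that the Page-type deficit $\log d-\bigl(\mathbb{E}H(\Phi^c(\rho))-H(\rho)\bigr)$ behaves like $e^{H(\rho)}/(2d)$ and is therefore monotone in $H(\rho)$. That formula is only plausible for flat spectra; for general $\rho$ the deficit depends on the entire spectrum of $\rho$ through $H(\mu\boxtimes\nu)-H(\nu)$, not on $H(\rho)$ alone, and no monotonicity or variational inequality of the required form is established anywhere --- deferring it to ``a careful Page-type analysis'' defers the entire content of the conjecture. The uniformity step inherits the same problem: the net argument needs the pointwise expectation bound to hold with an explicit $o(1)$ margin uniformly over the middle-entropy stratum, which presupposes the very inequality that is missing. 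So your proposal is a reasonable and arguably better-structured programme than the paper's own heuristic, but it is not a proof; the decisive step remains exactly as open as the authors leave it.
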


To provide some intuition behind this conjecture, we first state 
a theorem which tells us about the distribution of eigenvalues of 
the output of an extended random quantum channel, when the input 
is also chosen randomly.

This is summarized as Theorem~\ref{th:free-prod}.

\begin{theorem}\label{th:free-prod}
	Let $\Phi$ be a random channel with Jamio\l{}kowski matrix 
	$D_\Phi$, we assume
	that the limiting distribution of eigenvalues of $D_\Phi$ is 
	given by $\mu$.
	Let $\phi$ be a random pure state with the limiting distribution 
	of Schmidt
	values given by $\nu$. We define
	\begin{equation}
	\sigma = (\Phi \otimes \Id)(\proj{\phi}),
	\end{equation}
	then the limiting distribution of eigenvalues of $\sigma$ is 
	given by $\mu \boxtimes \nu$.
\end{theorem}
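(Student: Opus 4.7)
The plan is to reduce the statement to two ingredients: an algebraic identity expressing $\sigma$ as a conjugated product, and asymptotic freeness of the two resulting factors. The conclusion then follows from the very definition of the free multiplicative convolution.

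First, I would exploit unitary freedom on the pure input. By local unitaries on $\XX_A$ and $\XX_R$ the pure state can be brought into Schmidt form $\ket{\phi}=\kket{\sqrt{P}}$, where $P$ is diagonal with entries equal to the squared Schmidt coefficients, whose empirical distribution converges to $\nu$. For the ensembles of~\cite{nechita2018almost} the law of $D_\Phi$ is invariant under suitable $U\otimes V$-conjugations, so the local unitary on $\XX_A$ can be absorbed without changing the spectral statistics of $D_\Phi$, while the local unitary on $\XX_R$ commutes out of $\sigma$ and leaves its spectrum unchanged. The identity employed earlier in the proof of the concavity lemma then yields
\begin{equation}
\sigma \;=\; (\1 \otimes \sqrt{P})\, D_\Phi\, (\1 \otimes \sqrt{P}),
\end{equation}
so by cyclicity the nonzero spectrum of $\sigma$ coincides with that of $D_\Phi \cdot (\1 \otimes P)$. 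Since $\1 \otimes P$ has the eigenvalues of $P$ each repeated $d$ times, its empirical spectral distribution also converges to $\nu$.

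The main step is then to establish asymptotic freeness of $D_\Phi$ and $\1\otimes P$. The state $\ket{\phi}$ is drawn independently of $\Phi$, and in the model of~\cite{nechita2018almost} the Choi matrix arises as a partial trace of a projection onto a Haar-random subspace, which equips $D_\Phi$ with Haar invariance on a suitable tensor factor. Voiculescu's theorem on asymptotic freeness of a unitarily invariant random matrix and an independent sequence of matrices of bounded operator norm then delivers asymptotic freeness of the pair $(D_\Phi,\,\1\otimes P)$ as $d\to\infty$. By the definition of $\boxtimes$, this gives that the limiting empirical spectral distribution of $D_\Phi\cdot(\1\otimes P)$ is $\mu\boxtimes\nu$, which is the claim.

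The main obstacle is establishing this asymptotic freeness: one must check that the invariance group under which $D_\Phi$ is distributed in the ensembles of~\cite{nechita2018almost} acts compatibly with the structure of the second factor $\1\otimes P$ so that a standard Voiculescu-type argument applies. The remaining reductions — Schmidt form, the conjugated-product identity, and the reduction of $\1 \otimes P$ to $\nu$ — are algebraic and essentially immediate.
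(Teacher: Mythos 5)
Your proposal follows essentially the same route as the paper: reduce to Schmidt form, absorb the local unitaries into the channel (the paper does this via $\Phi_W(\rho)=\Phi(W\rho W^\dagger)$, whose Choi matrix is isospectral to $D_\Phi$), obtain the identity $\sigma=(\1\otimes\sqrt{P})D_\Phi(\1\otimes\sqrt{P})$, and conclude that the spectrum of $\sigma$ equals that of $D_\Phi(\1\otimes P)$, whose limiting distribution is $\mu\boxtimes\nu$. If anything, you are more explicit than the paper about the one genuinely nontrivial step --- the asymptotic freeness of $D_\Phi$ and $\1\otimes P$, which the paper dispatches with ``which gives the result'' --- and you correctly flag that verifying this freeness (the conjugating unitary acts only on one tensor factor, so a bare Voiculescu argument does not immediately apply) is where the real work lies.
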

\begin{proof}\label{proof:01}
	Note that 
	\begin{equation}
	\ket{\phi} = (W \otimes \1) \sum_{i} \sqrt{\lambda_i} \ket{i, i},
	\end{equation}
	where $W$ is a random unitary matrix and $\{\ket{i}\}$ is the 
	computational 
	basis.
	\begin{equation}
	\begin{split}
	\sigma=(\Phi \otimes \Id)(\proj{\phi}) 
	&= \sum_{ij} \sqrt{\lambda_i \lambda_j} (\Phi \otimes \Id) ((W 
	\otimes \1)  \ketbra{i,i}{j,j} (W^\dagger \otimes \1) ) \\
	&= \sum_{ij} \sqrt{\lambda_i \lambda_j} (\Phi_W \otimes \Id) ( 
	\ketbra{i,i}{j,j}),
	\end{split}
	\end{equation}
	where $\Phi_W(\rho) = \Phi(W \rho W^\dagger)$, note that 
	Jamio\l{}kowski matrix
	of channel $\Phi_W$ has the same spectrum that the 
	Jamio\l{}kowski matrix of
	channel $\Phi$. Next we write 
	\begin{equation}
	\begin{split}
	\sigma &= \sum_{ij} \sqrt{\lambda_i \lambda_j} (\Phi_W \otimes 
	\Id) ( \ketbra{i,i}{j,j}) \\
	&= \sum_{ij}  \sqrt{\lambda_i \lambda_j} \Phi_W(\ketbra{i}{j}) 
	\otimes \ketbra{i}{j}\\
	&=
	(\Id \otimes \diag(\sqrt{\lambda}) ) D_{\Phi_W} (\Id \otimes 
	\diag(\sqrt{\lambda}) ).
	\end{split}
	\end{equation}
	Now note, that the eigenvalues of $\sigma$ are the same as 
	eigenvalues of
	$D_{\Phi_W} (\Id \otimes \diag({\lambda}) )$, which gives the 
	result.
\end{proof}

Now, we have the following intuition behind our conjecture. Combining 
the
results 
from~\cite{zyczkowski2011generating,puchala2016distinguishability}
with~\cite{nechita2018almost,kukulski2021generating} we have for 
large $d$
and uniform distribution of channels
\begin{equation}
H^K(\Phi) = 2\log d - \frac12 + o(1).
\end{equation}

Next, we have the following result. Let $\ket{\phi}$ be a random pure 
state with
the Schmidt numbers chosen according to some measure $\nu$ and let 
$\ket{\phi}$
be free from $\Phi$. Then the output state has its spectrum given by 
the free
multiplicative convolution $\mu \boxtimes \nu$, where $\mu$ is the 
distribution
of eigenvalues of $D_\Phi$.

Let us consider following optimization target
\begin{equation}
D(\Phi\|\RR) = \sup_{\ket{\phi} \in \Ln(\XX_A \otimes \XX_R)} 
D\Big((\Phi\kr\Id)(\ketbra{\phi}{\phi})\|
(\RR\kr\Id)(\ketbra{\phi}{\phi})\Big),
\end{equation}
where $\ket{\phi} = (U \otimes V) \sum_{i} \sqrt{\lambda_i} \ket{i, 
i}$ for
some unitary matrices $U$, $V$. Note that optimization result is 
invariant
under local operations $U$, $V$ on $\ket{\phi}$, but it depends on
$\lambda_i$. It can be checked that
\begin{equation}
\sigma=(\Phi\kr\Id)(\ketbra{\phi}{\phi})=(\Id \otimes 
\diag(\sqrt{\lambda}) ) D_{\Phi} (\Id \otimes 
\diag(\sqrt{\lambda}) )\label{eq:sigma}
\end{equation}
and 
\begin{equation}
\gamma=(\RR\kr\Id)(\ketbra{\phi}{\phi})=\Id/d \otimes 
\diag(\lambda).\label{eq:gamma}
\end{equation}
Next consider $D(\Phi\|\RR)=\sup_{\ket{\phi} \in \Ln(\XX_A \otimes 
\XX_R)} \Tr\sigma \log \sigma - \Tr \sigma \log \gamma$. 
Moreover,
\begin{equation}
\begin{split}
\Tr \sigma \log \gamma & = \Tr D_\Phi \cdot \Id \otimes 
\diag\left(\lambda\log \frac{\lambda}{d}\right) \\
& =\Tr \Tr_A D_\Phi \cdot \diag\left(\lambda\log 
\frac{\lambda}{d}\right)=-H(\lambda)-\log d
\end{split}
\end{equation}
The above expression reaches minimum for uniform distributed 
$\lambda$ and
them is equal to $\Tr \sigma \log \gamma=-2\log d$. Since $\sigma$ has
spectrum given by $\mu \boxtimes \nu$, then
\begin{equation}
\Tr\sigma \log \sigma \simeq -H(\mu \boxtimes \nu),
\end{equation}
where $\boxtimes$ denotes the multiplicative free convolution of 
measures
$\mu$ and $\nu$~\cite{voiculescu1987multiplication}. Assuming maximal
entropy $H(\lambda) = \log(d)$ implies $\nu =\delta(1/d)$, which 
behaves
like in identity in the operation $\boxtimes$. Hence, we have
\begin{equation}
	H(\mu \boxtimes \nu) = H(\mu),
\end{equation}
which gives us
\begin{equation} 
	D(\Phi \| \RR) = \frac12.
\end{equation}
Now, going back to the entropy of the channel $\Phi$ we have
\begin{equation}
H(\Phi) \simeq \log(d) -  \frac12 +o(1).
\end{equation}

The intuition behind our assumption that $\mu = \delta(1/d)$ is 
presented in
Fig.~\ref{fig:intuition}. In it, we present the quantity $D(\sigma
\|\gamma)$ where $\sigma$ and $\gamma$ are as in 
Eqs.~\eqref{eq:sigma} and
\eqref{eq:gamma} respectively. The plots are presented for various
distributions $\nu$ of the Schmidt numbers of the input state 
$\ket{\psi}$.
The red line shows the case $\nu = \mathrm{Dir}(d, 1)$, the blue line 
shows
the case when $\nu = \mathrm{Dir}(2, 1)$, the yellow line is the case 
$\nu =
\mathrm{Dir}(d, 2)$ and finally, the green line shows the case $\nu =
\delta(1/d)$. The dashed line is the quantity $\log(d) - \frac12$. As 
can be
Subsequently the more non-zero Schmidt numbers and the more they are
concentrated in the center of the simplex $\Delta_{d-1}$, the closer 
we get
to the quantity we conjecture. Finally, when we choose a deterministic
distribution in the center of simplex, we achieve the optimal value.

\begin{figure}
	\centering\includegraphics{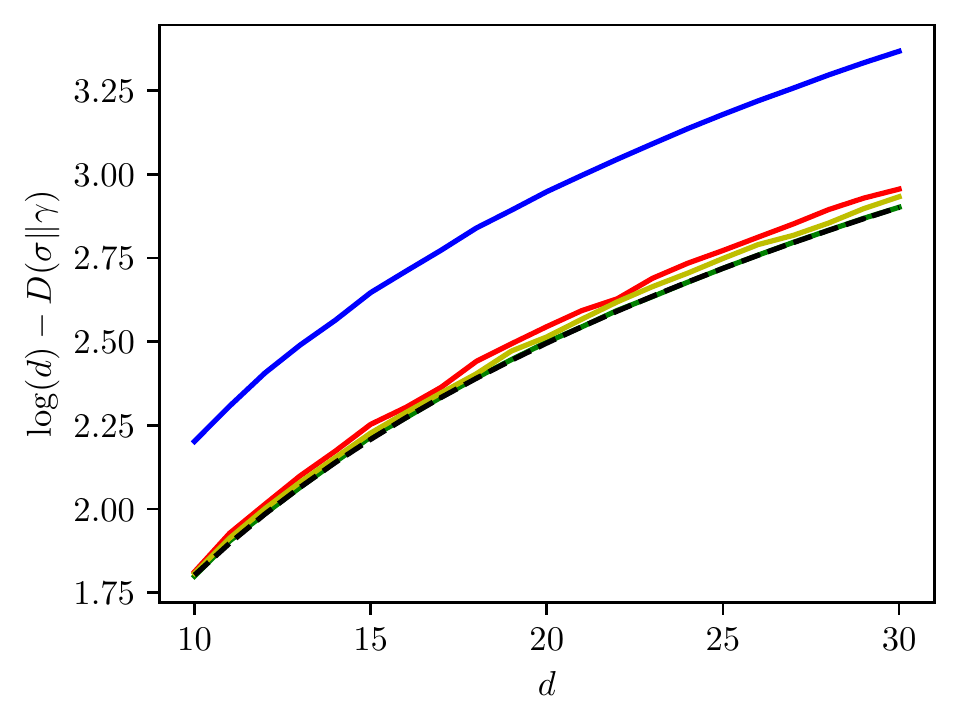} \caption{The quantity
	$D(\sigma \| \gamma)$ where $\sigma$ and $\gamma$ are as in
	Eqs.~\eqref{eq:sigma} and \eqref{eq:gamma} respectively. The 
	plots are
	for  $\nu = \mathrm{Dir}(d, 1)$ (red),  $\nu = \mathrm{Dir}(2, 1)$
	(blue), $\nu = \mathrm{Dir}(d, 2)$ (yellow) and $\nu = 
	\delta(1/d)$
	(green). The dashed line is the quantity $\log(d) -
	\frac12$.}\label{fig:intuition}
\end{figure}
\section{Conclusion}

In this paper, we discuss two approaches to entropic 
quantification of quantum channels. We begin our studies with a 
formulation of a lemma, which describes a relation between the 
entropy of quantum channels proposed by Gour and Wilde 
\cite{gour2018entropy} and entropy of Jamiołkowski matrix of 
quantum 	channels~\cite{roga2011entropic,roga2013entropic}. 
We show that both definitions give the same value up to an 
additive constant in the case of the quantum unital qubit 
channels. This part of our considerations uses the mathematical 
language of distinguishability of quantum states and channels. 
Therefore we assume that obtained results can be used to study 
resource theories and hypothesis testing.
We also provide a conjecture backed by numerical experiments that 
both formulas provide the same results up to an additive constant 
in the case of large system size.

\section*{Acknowledgements}
This work was supported by the Polish National Science Centre under grant
number 2016/22/E/ST6/00062.

\bibliography{bib} 
\bibliographystyle{ieeetr}
	
\end{document}